  \providecommand\BibTeX{{%
    \normalfont B\kern-0.5em{\scshape i\kern-0.25em b}\kern-0.8em\TeX}}}
\begin{document}

\title{Again, random numbers fall mainly in the planes: 
  xorshift128+ generators}

\author{Hiroshi Haramoto}
\email{haramoto@ehime-u.ac.jp}
\affiliation{%
  \institution{Faculty of Education, Ehime University}
  \streetaddress{3 Bunkyocho}
  \city{Matsuyama}
  \state{Ehime}
  \postcode{790-8577}
}

\author{Makoto Matsumoto}
\email{m-mat@math.sci.hiroshima-u.ac.jp}
\affiliation{%
  \institution{Graduate School of Science, Hiroshima University}
  \streetaddress{1-3-1 Kagamiyama}
  \city{Higashi-Hiroshima}
  \state{Hiroshima}
  \postcode{739-8526}
}








\renewcommand{\shortauthors}{Haramoto and Matsumoto}

\begin{abstract}
  Xorshift128+ are pseudo random number generators with eight sets
  of parameters. 
  Some of them are standard generators in many platforms, 
  such as JavaScript V8 Engine. 
  We show that in the 3D plots generated by this method, 
  points concentrate on planes, ruining the randomness. 
\end{abstract}


\begin{CCSXML}
<ccs2012>
<concept>
<concept_id>10002950.10003705</concept_id>
<concept_desc>Mathematics of computing~Mathematical software</concept_desc>
<concept_significance>500</concept_significance>
</concept>
<concept>
<concept_id>10003752.10010061.10010062</concept_id>
<concept_desc>Theory of computation~Pseudorandomness and derandomization</concept_desc>
<concept_significance>500</concept_significance>
</concept>
</ccs2012>
\end{CCSXML}

\ccsdesc[500]{Mathematics of computing~Mathematical software}
\ccsdesc[500]{Theory of computation~Pseudorandomness and derandomization}



\keywords{pseudo random number generators, xorshift128+, discrete mathematics}

\maketitle

\section{Notation and xorshift128+}
Let $\bm{x}$ be a $64$-bit (unsigned) integer. 
Let $\mathbb{F}_2 = \{0,1\}$ denote the two-element field, 
and $\bm{x} \in \mathbb{F}_2^{64}$ is considered 
to be a $64$-dimensional row vector. 
A linear transform $\bm{x} \mapsto \bm{x}L$ is defined as the left shift 
by $1$ bit, and $\bm{x} \mapsto \bm{x}R$ is the right shift by $1$ bit.
The matrix 
$\begin{pmatrix} 0 & & & & \\ 1 & 0 & & & \\ & 1 & \ddots & & \\
  & & \ddots & \ddots & \\ & & & 1 &0 \end{pmatrix}$ is $L$ and 
$\begin{pmatrix} 0 & 1 & & & \\  & 0 & 1 & & \\ & & \ddots & \ddots & \\
  & & & \ddots & 1 \\ & & &  & 0 \end{pmatrix}$ is $R$. 
Let $I$ denote the identity matrix. 

The xorshift128+ \cite{VIGNA2017175} pseudo random number generator (PRNG) 
has $128$-bit state space. 
A state consists of two $64$-bit words $(s_i, s_{i+1})$, 
and the next state is $(s_{i+1}, s_{i+2})$, where
\begin{equation}
\label{recursion1}
s_{i+2}  = s_i (I \oplus L^a) (I \oplus R^b) \oplus s_{i+1}(I \oplus R^c).
\end{equation}
Here the notation $\oplus$ is used for bitwise xor operation, 
or equivalently addition of vectors in $\mathbb{F}_2^{64}$ and that
of $64 \times 64$ matrices $I$ and $L^a$, etc.
The output $o_i$ at the $i$-th state $(s_i, s_{i+1})$ is given by 
\[
o_i = s_i + s_{i+1} \bmod{2^{64}}, 
\]
where $+$ denotes addition of $64$-bit integers.
The $128$-bit state $(s_0, s_1)$ is the initial state. 

The consecutive three outputs from $(s_i, s_{i+1})$ is 
$x=s_i+s_{i+1} \bmod{2^{64}}$, 
$y=s_{i+1}+s_{i+2} \bmod{2^{64}}$ 
and $z=s_{i+2}+s_{i+3} \bmod{2^{64}}$, 
and we shall show some relations among $x, y$ and $z$.


\section{Approximation of xor by sum and subtraction}
Let $x$, $y$ be $n$-bit unsigned integers. 
We consider mainly $n=3$ case, and thus only $8$ possibilities exist 
for each of $x$ and $y$. Our claim is that $x \oplus y$ is 
with non negligible probability well-approximated by 
one of $x+y$, $x-y$ or $y-x$, as analyzed below.

We consider $\bm{x}=(x_1, \ldots, x_n) \in \mathbb{F}_2^n$ 
as an $n$-dimensional vector, which is also considered as an $n$-bit 
unsigned integer denoted by $x=\sum\limits_{i=1}^n x_i 2^{n-i}$. 
In this situation, we write $\bm{x} = x$.
Let $\bm{y} = (y_1, \ldots, y_n)$ be another $n$-dimensional vector.
Then $\bm{x} \oplus \bm{y}$ is the addition of $\mathbb{F}_2$ vectors.
We discuss when $\bm{x} \oplus \bm{y} = x+y$ holds, 
where the both sides are regarded as $n$-bit integers.
Note that the operation $\bm{x} \oplus \bm{y}$ is similar to $x+y$, except that
no over flow is reflected. 
Then $\bm{x} \oplus \bm{y} \leq x+y$ holds and the equality holds 
if and only if no overflow occurs. 
Equivalently, if and only if $(x_i, y_i) \in \{(0,0), (1,0), (0,1)\}$ holds
for $i=1$, $2$, $\ldots$, $n$. 
Thus, among $4^n$ possibilities of pairs $\bm{x}, \bm{y}$, 
exactly $3^n$ pairs satisfy $x+y = \bm{x} \oplus \bm{y}$. 
If plus is taken module $2^n$, there are more cases with equality, 
e.g., $\bm{x}=(1,0,\ldots,0) = \bm{y}$ is the case.
This observation is summarized as follows.

\begin{theorem}[xor equals sum]
  \label{theoremplus}
  Let $\bm{x}, \bm{y} \in \mathbb{F}_2^n$  be $n$-bit integers
  $x$, $y$, where $\bm{x} = (x_1, \ldots, x_n)$, 
  $\bm{y} = (y_1, \ldots, y_n)$, $x=\sum\limits_{i=1}^n x_i 2^{n-i}$
  and $y=\sum\limits_{i=1}^n y_i 2^{n-i}$. 
  Then $\bm{x} \oplus \bm{y} \leq x+y$ holds, and the equality holds
  if and only if $(x_i, y_i) \neq (1,1)$ holds for $i=1$,$2$,$\ldots$, $n$. 
  Among $4^n$ pairs $(\bm{x}, \bm{y})$, $3^n$ pairs satisfy the equality. 
  More pairs satisfy $\bm{x} \oplus \bm{y} = x+y \bmod{2^n}$. 
\end{theorem}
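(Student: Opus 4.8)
The plan is to compare the two integers $x+y$ and $\bm{x} \oplus \bm{y}$ bit-position by bit-position, exploiting that both are weighted sums over the same positions with the same weights $2^{n-i}$. First I would write $x+y = \sum_{i=1}^n (x_i + y_i)\, 2^{n-i}$ and $\bm{x} \oplus \bm{y} = \sum_{i=1}^n (x_i \oplus y_i)\, 2^{n-i}$, so that their difference equals $\sum_{i=1}^n \bigl[(x_i + y_i) - (x_i \oplus y_i)\bigr] 2^{n-i}$. The key elementary fact, verified by checking the four cases for a single pair of bits, is that $(x_i + y_i) - (x_i \oplus y_i)$ equals $0$ unless $(x_i, y_i) = (1,1)$, in which case it equals $2$.

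From this identity the inequality $\bm{x} \oplus \bm{y} \leq x+y$ is immediate, since the difference is a sum of non-negative terms $2 \cdot 2^{n-i}$ ranging over exactly the positions $i$ with $(x_i, y_i) = (1,1)$. The same identity yields the equality characterization at once: $x+y = \bm{x} \oplus \bm{y}$ holds precisely when no such position exists, i.e.\ when $(x_i, y_i) \neq (1,1)$ for every $i$. The constraint decouples across the $n$ positions, and at each position three of the four possible values of $(x_i, y_i)$ are admissible, so the number of pairs achieving equality is $3^n$ out of $4^n$.

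For the final statement about $x + y \bmod 2^n$, I would first note that all $3^n$ exact-equality pairs still satisfy the congruence: if $\bm{x} \oplus \bm{y} = x+y$ exactly, then $x+y = \bm{x} \oplus \bm{y} < 2^n$, so reduction mod $2^n$ changes nothing. It remains only to exhibit one further pair, and the example $\bm{x} = \bm{y} = (1,0,\ldots,0)$ from the text does this: here $x = y = 2^{n-1}$, so $x + y = 2^n \equiv 0 \pmod{2^n}$ while $\bm{x} \oplus \bm{y} = 0$, yet $(x_1, y_1) = (1,1)$, placing this pair outside the $3^n$. There is no real obstacle in any of these steps, which are case-checks and counting; the only point needing care is keeping the exact inequality and the mod statement logically separate, since reducing mod $2^n$ genuinely creates extra equalities that the overflow-free comparison forbids.
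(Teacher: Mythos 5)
Your proof is correct and takes essentially the same route as the paper: the paper's one-line proof simply invokes its preceding observation that $\bm{x} \oplus \bm{y}$ is addition with carries suppressed, so that equality holds exactly when no position has $(x_i, y_i) = (1,1)$, which is precisely the content of your per-position identity $(x_i + y_i) - (x_i \oplus y_i) = 2$ iff $(x_i,y_i)=(1,1)$. Your write-up just makes that observation explicit and rigorous, and your witness for the $\bmod\, 2^n$ claim, $\bm{x} = \bm{y} = (1,0,\ldots,0)$, is the same example the paper gives in the text.
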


\begin{proof}
Proof follows from the previous observation. 
\end{proof}

Another observation is about when the equality 
\[
\bm{x} \oplus \bm{y} = x-y
\]
occurs. Again, we do not take modulo $2^n$ for the right hand side. 
If we compute subtraction $x-y$ in binary without borrows, 
we obtain $\bm{x} \oplus \bm{y}$. There may be borrows, 
so we have inequality
\[
\bm{x} \oplus \bm{y} \geq x-y, 
\]
with equality holds when no borrow occurs for each digit, 
or equivalently, the pair of bits $(x_i, y_i)$ lies in 
$\{(0,0), (1,0), (1,1)\}$ for each $i=1,2,\ldots,n$. 
There are $3^n$ such pair $(\bm{x}, \bm{y})$.

\begin{theorem}[xor equals subtraction]
  \label{theoremminus}
  Let $\bm{x}, \bm{y} \in \mathbb{F}_2^n$ be as in Theorem \ref{theoremplus}.
  We have inequality
  \[
  \bm{x} \oplus \bm{y} \geq x-y,
  \]
  and the equality holds if and only if $(x_i, y_i) \neq (0,1)$ 
  for each $i=1,2,\ldots, n$. There are $3^n$ such pairs.
\end{theorem}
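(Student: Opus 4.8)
The plan is to prove the inequality and its equality case by making the informal ``borrow'' argument of the preceding paragraph fully precise through a per-bit algebraic identity, which I find cleaner than tracking borrow propagation directly. First I would record the single-bit identity
\[
x_i - y_i = (x_i \oplus y_i) - 2(\overline{x_i} \wedge y_i),
\]
where $\overline{x_i} = 1 \oplus x_i$ is the complemented bit and $\wedge$ denotes bitwise conjunction. This is checked immediately on the four cases $(x_i, y_i) \in \{(0,0),(1,0),(0,1),(1,1)\}$: the second term vanishes except when $(x_i, y_i) = (0,1)$, where it supplies exactly the $-2$ that turns $x_i \oplus y_i = 1$ into $x_i - y_i = -1$.

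Next I would sum this identity over $i = 1, \ldots, n$ against the place-value weights $2^{n-i}$. Since $\bm{x} \oplus \bm{y}$ is by definition the integer $\sum_i (x_i \oplus y_i)2^{n-i}$ assembled positionally, the weighted sum gives the integer identity
\[
x - y = (\bm{x} \oplus \bm{y}) - 2\sum_{i=1}^n (\overline{x_i} \wedge y_i)\,2^{n-i}.
\]
The subtracted quantity is twice a sum of non-negative terms, hence is $\geq 0$, and this yields $\bm{x} \oplus \bm{y} \geq x - y$ at once.

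For the equality characterization, equality holds exactly when that subtracted quantity is $0$; because every summand $2^{n-i}(\overline{x_i} \wedge y_i)$ is non-negative, this forces $\overline{x_i} \wedge y_i = 0$ for each $i$, i.e.\ no position carries the pattern $(x_i, y_i) = (0,1)$. The permissible per-position pairs are then $\{(0,0),(1,0),(1,1)\}$, giving three choices at each of the $n$ positions, so the number of equality pairs is $3^n$, in parallel with Theorem~\ref{theoremplus}.

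The one step deserving care --- and the main obstacle to stating things rigorously rather than pictorially --- is justifying that summing the per-bit identity against the weights $2^{n-i}$ produces a genuine integer identity with no interaction between positions. This is precisely the point that $\oplus$ and $\wedge$, once read as integers, are defined digit-by-digit so that the weighted decomposition is additive and no carry or borrow couples distinct positions. Granting that positional independence, which is the same principle underlying Theorem~\ref{theoremplus}, the inequality, the equality condition, and the count all follow mechanically.
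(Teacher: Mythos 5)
Your proof is correct, and it takes a genuinely different (and more rigorous) route than the paper. The paper gives no formal proof of this theorem at all: it simply summarizes the preceding informal observation that binary subtraction ``without borrows'' produces the xor, that borrows can only decrease the result, and that equality therefore holds exactly when no borrow occurs. Made fully precise, that argument needs a small induction on borrow propagation (a borrow generated at one digit can trigger further borrows at higher digits). Your per-bit identity
\[
x_i - y_i = (x_i \oplus y_i) - 2\,(\overline{x_i} \wedge y_i)
\]
sidesteps borrow-tracking entirely: weighting by $2^{n-i}$ and summing exhibits $x-y$ as $(\bm{x}\oplus\bm{y})$ minus an explicit non-negative deficit, from which the inequality, the equality criterion, and the $3^n$ count all drop out. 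One remark: the step you flag as ``deserving care'' --- that summing the per-bit identities against the weights $2^{n-i}$ yields a genuine integer identity --- actually needs no care at all. Each per-bit identity is an identity of integers, so multiplying by $2^{n-i}$ and summing is plain linearity; the only definitional input is the paper's convention that the integer denoted by $\bm{x}\oplus\bm{y}$ is $\sum_i (x_i\oplus y_i)2^{n-i}$. There is no possible ``interaction between positions'' to rule out, precisely because your method never performs positional arithmetic with carries or borrows --- that is its advantage over the paper's argument. The same device with $x_i + y_i = (x_i \oplus y_i) + 2\,(x_i \wedge y_i)$ would give an equally mechanical proof of Theorem~\ref{theoremplus}.
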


\begin{theorem}
  \label{counttheorem}
  Let $X$ be the set of pairs 
  $\{(\bm{x}, \bm{y}) \mid \bm{x}, \bm{y} \in \mathbb{F}_2^n\}$, and put
  \begin{align*}
    A&:=\{(\bm{x}, \bm{y}) \in X \mid \bm{x}\oplus\bm{y} = x+y\} \\
    B&:=\{(\bm{x}, \bm{y}) \in X \mid \bm{x}\oplus\bm{y} = x-y\} \\
    C&:=\{(\bm{x}, \bm{y}) \in X \mid \bm{x}\oplus\bm{y} = y-x\}.
  \end{align*}
  Then, $\# X = 4^n$, $\#A=\#B=\#C=3^n$, 
  $\#(A \cap B) = \#(B \cap C) = \#(C \cap A) = 2^n$, and 
  $\#(A \cap B \cap C)=1$ hold. 
  In particular, $\#(A\cup B \cup C)=3\cdot 3^n - 3 \cdot 2^n + 1$ holds.
\end{theorem}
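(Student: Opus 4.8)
The plan is to reduce everything to a digit-by-digit (coordinate-by-coordinate) analysis, since both Theorem~\ref{theoremplus} and Theorem~\ref{theoremminus} already characterize membership by a constraint imposed independently on each coordinate $i$. First I would record the three local constraints. By Theorem~\ref{theoremplus}, a pair lies in $A$ exactly when $(x_i,y_i)\neq(1,1)$ for every $i$, i.e.\ each coordinate pair is drawn from $\{(0,0),(1,0),(0,1)\}$. By Theorem~\ref{theoremminus}, it lies in $B$ exactly when $(x_i,y_i)\neq(0,1)$, i.e.\ from $\{(0,0),(1,0),(1,1)\}$. Interchanging the roles of $\bm{x}$ and $\bm{y}$ in Theorem~\ref{theoremminus} gives the analogous description of $C$: a pair lies in $C$ exactly when $(x_i,y_i)\neq(1,0)$, i.e.\ from $\{(0,0),(0,1),(1,1)\}$. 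The three forbidden patterns $(1,1)$, $(0,1)$, $(1,0)$ are distinct, each is banned by exactly one of the three conditions, and $(0,0)$ is never banned; so the whole computation becomes bookkeeping over which of the four patterns in $\{0,1\}^2$ survive.

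Next I would count. Because each constraint factors over coordinates, the number of pairs in any intersection of these sets equals $k^n$, where $k$ is the number of coordinate patterns allowed simultaneously by all the constraints involved. With no constraint $k=4$, giving $\#X=4^n$. Each of $A$, $B$, $C$ bans a single pattern, leaving $k=3$, so $\#A=\#B=\#C=3^n$. For a pairwise intersection two distinct patterns are banned, leaving $k=2$: explicitly $A\cap B$ allows $\{(0,0),(1,0)\}$, $B\cap C$ allows $\{(0,0),(1,1)\}$, and $C\cap A$ allows $\{(0,0),(0,1)\}$, so each has cardinality $2^n$. For the triple intersection all three of $(1,1)$, $(0,1)$, $(1,0)$ are banned, leaving only $(0,0)$, so $k=1$ and $\#(A\cap B\cap C)=1$, the single all-zero pair.

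Finally I would assemble the union count by inclusion--exclusion,
\[
\#(A\cup B\cup C)=\#A+\#B+\#C-\#(A\cap B)-\#(B\cap C)-\#(C\cap A)+\#(A\cap B\cap C),
\]
which substitutes to $3\cdot 3^n-3\cdot 2^n+1$. I expect no serious obstacle: the only point needing a word of justification is the description of $C$, which is not stated as its own theorem but follows from Theorem~\ref{theoremminus} under the symmetry $\bm{x}\leftrightarrow\bm{y}$, harmless because $\bm{x}\oplus\bm{y}=\bm{y}\oplus\bm{x}$. Everything else is the remark that all four conditions are conjunctions of per-coordinate constraints, so cardinalities multiply across the $n$ coordinates and reduce to counting the allowed patterns in $\{0,1\}^2$.
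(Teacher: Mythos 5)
Your proposal is correct and follows essentially the same route as the paper's own proof: characterize $A$, $B$, $C$ by per-coordinate allowed patterns, count intersections multiplicatively as $k^n$, and finish with inclusion--exclusion. Your explicit remark that $C$ comes from Theorem~\ref{theoremminus} via the symmetry $\bm{x}\leftrightarrow\bm{y}$ is a small point the paper leaves implicit, but otherwise the arguments coincide.
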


\begin{proof}
  We have 
  \begin{align*}
    A&=\{(\bm{x}, \bm{y}) \in X \mid {}^\forall i \,
    (x_i, y_i) \in \{(0,0), (0,1), (1,0)\} \} \\
    B&=\{(\bm{x}, \bm{y}) \in X \mid {}^\forall i \,
    (x_i, y_i) \in \{(0,0), (1,0), (1,1)\} \} \\
    C&=\{(\bm{x}, \bm{y}) \in X \mid {}^\forall i \,
    (x_i, y_i) \in \{(0,0), (0,1), (1,1)\} \}, 
  \end{align*}
  and the second series of equalities hold. Since
  $
  A \cap B = \{(\bm{x}, \bm{y}) \in X \mid {}^\forall i \,
  (x_i, y_i) \in \{(0,0), (1,0)\} \} 
  $,
  $\#(A \cap B) = 2^n$ follows, and 
  $\#(B \cap C) = \#(C \cap A) = 2^n$ is similarly proved.
  We have 
  $A \cap B \cap C = \{(\bm{x}, \bm{y}) \in X \mid 
  {}^\forall i \, (x_i, y_i) \in \{(0,0)\} \}$, and 
  see that $\#(A \cap B \cap C) = 1$. 
  The last equality follows from the standard 
  inclusion-exclusion principle.
\end{proof}

\begin{example}
  Suppose $n=3$, i.e., we consider three-bit precision. 
  Then, among $4^3=64$ pairs $(\bm{x}, \bm{y})$, 
  we showed that at least one of $\bm{x}\oplus\bm{y}=x+y, x-y, y-x$ 
  occurs for $3\cdot3^3-3\cdot2^3+1=58$ pairs, 
  which is highly plausible:
  only $64-58=6$ exceptions exist. 
  For $n=4$, among $256$ pairs, $196$ pairs satisfy one of the three relations.
\end{example}

\section{Analysis of plus in xor, aka $+$ in xorshift128+}
We consider the consecutive three outputs
\begin{align*}
  x &= s_{i}+s_{i+1} \bmod{2^{64}} \\
  y &= s_{i+1}+s_{i+2} \bmod{2^{64}} \\
  z &= s_{i+2}+s_{i+3} \bmod{2^{64}},
\end{align*}
where $s_{i+2}$ and $s_{i+3}$ are determined from 
$(s_i, s_{i+1})$ by the recursion (\ref{recursion1}). 
We analyze
\begin{align}
  \notag z&=s_{i+2}+s_{i+3} \bmod{2^{64}} \\ 
  \label{z1}&=(s_{i+1}(I\oplus R^c) 
  \oplus s_i(I\oplus L^a)(I\oplus R^b)) \\
  \label{z2}&\quad +(s_{i+2}(I\oplus R^c) 
  \oplus s_{i+1}(I\oplus L^a)(I\oplus R^b)) \bmod{2^{64}}
\end{align}

We could not give an exact analysis, but give an intuitional approximation.
Numbers $b$ and $c$ are larger than $10$, and the most significant $b$ bits
of $\bm{x}(I+R^b)$ is identical with those of $\bm{x}$. 
Thus, as far as we concentrate on the most significant several bits
(we use Theorems mainly for $n=3$ MSBs), 
we may consider $I \oplus R^b$ and $I \oplus R^c$ to be the identity
matrix $I$. 
Thus, we have an approximation
\begin{align}
  \label{z3} z &\approx (s_{i+1} \oplus s_i(I\oplus L^a)) \\
  \label{z4} &\quad + (s_{i+2} \oplus s_{i+1}(I\oplus L^a)).
\end{align}

We denote by $\approx$ when the both sides coincide up to 
the most significant $\min\{b,c\}$ bits or some specified $n$-bits, 
except that with small probability the matrices $R^b$ and/or $R^c$, 
through the carry of $+$ between (\ref{z3}) and (\ref{z4}),
may affect on the MSBs.

Now in (\ref{z3}), we have
\begin{align}
  \notag s_i(I\oplus L^a) &= s_i \oplus s_i L^a \\
  \label{s-label1} &= s_i \oplus (2^a s_i \bmod{2^{64}})
\end{align}
which is in Theorem \ref{counttheorem}, according to the cases
$A$, $B$, $C$, 
\begin{equation*}
  s_i \oplus (2^a s_i \bmod{2^{64}}) \approx
  \begin{cases}
    (1+2^a)s_i \bmod{2^{64}} & \mbox{Case $A_i$} \\
    (1-2^a)s_i \bmod{2^{64}} & \mbox{Case $B_i$} \\
    (2^a-1)s_i \bmod{2^{64}} & \mbox{Case $C_i$} \\
    \mbox{unknown} & \mbox{otherwise}
  \end{cases}
\end{equation*}
with respect to the most significant $n$ bits.
From now on we consider the most significant $n$ bits, 
with mainly $n=3$. Again in (\ref{z3}),
\begin{equation*}
  s_{i+1} \oplus s_i(I \oplus L^a) \approx
  \begin{cases}
    s_{i+1}+s_i(I\oplus L^a) & \mbox{Case $A'_i$} \\
    s_{i+1}-s_i(I\oplus L^a) & \mbox{Case $B'_i$} \\
    -s_{i+1}+s_i(I\oplus L^a) & \mbox{Case $C'_i$} \\
    \mbox{unknown} & \mbox{otherwise}
  \end{cases}
\end{equation*}
The same kind of case divisions are straight froward for 
$i$ replaced with $i+1$, denoted by $A_{i+1}$ etc.

We consider the following cases:
\begin{description}
\item[Case $+$] : Case $A'_i$ and Case $A'_{i+1}$ occur, 
\item[Case $-$] : Case $B'_i$ and Case $B'_{i+1}$ occur, 
\item[Case ${}^t-$] : Case $C'_i$ and Case $C'_{i+1}$ occur, 
\end{description}
and orthogonally the cases:
\begin{description}
\item[Case $1+2^a$] : Case $A_i$ and Case $A_{i+1}$ occur, 
\item[Case $1-2^a$] : Case $B_i$ and Case $B_{i+1}$ occur, 
\item[Case $2^a-1$] : Case $C_i$ and Case $C_{i+1}$ occur. 
\end{description}

We assume that both one of the cases $+$, $-$, ${}^t-$ and 
one of the cases $1+2^a$, $1-2^a$, $2^a-1$ occur.
For $n=3$ an approximated probability for this is:
\[
\left(\left(\frac{27}{64}\right)^2 \times 3\right)^2 \approx 0.285087 \cdots. 
\]

For example, assume that the case $-$ and $1-2^a$ occur. 
Then, we have 
\begin{align*}
  z &\approx (s_{i+1}-s_i(1-2^a)) + (s_{i+2}-s_{i+1}(1-2^a)) \\
  &= (s_{i+1}+s_{i+2}) + (2^a-1) (s_i+s_{i+1}) \\
  &= y + (2^a-1)x, 
\end{align*}
where we omit modulo $2^{64}$.

\begin{table}[htpb]
  \caption{$z$ from $x$ and $y$ by case division}
  \label{casetable}
  \begin{tabular}{|c|c|c|c|}
    \hline
    Case\textbackslash Case & $1+2^a$ & $1-2^a$ & $2^a-1$ \\ \hline
    $+$ & $(1+2^a)x+y$ & $(1-2^a)x+y$ & $(2^a-1)x+y$ \\ \hline 
    $-$ & $-(1+2^a)x+y$ & $(2^a-1)x+y$ & $(1-2^a)x+y$ \\ \hline 
    ${}^t-$ & $(1+2^a)x-y$ & $(1-2^a)x-y$ & $(2^a-1)x-y$ \\ \hline 
  \end{tabular}
\end{table}

A straight forward computation by case division (Table \ref{casetable}) 
gives that with non negligible probability one of 
\[
z \approx \pm(1+2^a) x \pm y, \quad z \approx \pm(2^a-1)x \pm y.
\]
hold. 
This shows that the consecutive three outputs $(x,y,z)$ by xorshift128+
tend to lie on eight planes, which give an explanation 
on Figure \ref{figure_magnified} \cite{arXiv1907.03251}. 
We compare these planes with the outputs of xorshift128+.

\begin{figure}
  \includegraphics[scale=0.4]{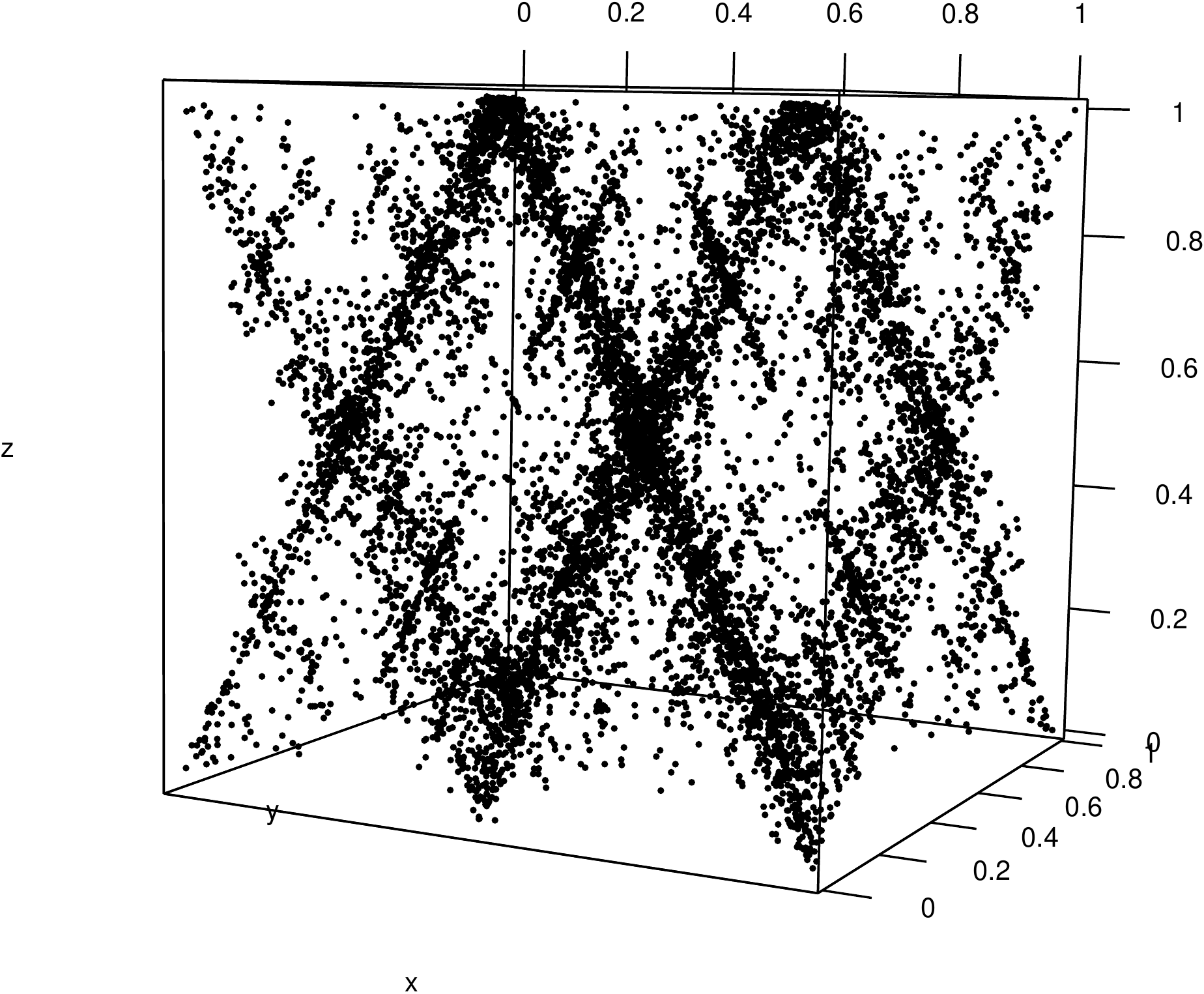}
  \caption{3D plots by xorshift128+: $x$-axis magnified by a factor of 
    $2^{22}$}
  \label{figure_magnified}  
\end{figure}

Figure 2 describes four planes 
\[
z = \pm (2^{23}+1) x \pm y \bmod{1}
\]
with restriction $0 \leq x \leq 1/2^{23}$, $0 \leq y \leq 1$. 
The $x$-axis is magnified with the factor $2^{23}$.
The other four planes with coefficient $2^{23}-1$ are 
very close to those for $2^{23}+1$, and so omitted.
Each plane has two connected components in this region.
Figure 3 shows the union of these four planes.

\begin{figure}[h]
  \begin{minipage}[b]{0.4\linewidth}
    \centering
    \includegraphics[scale=0.4]{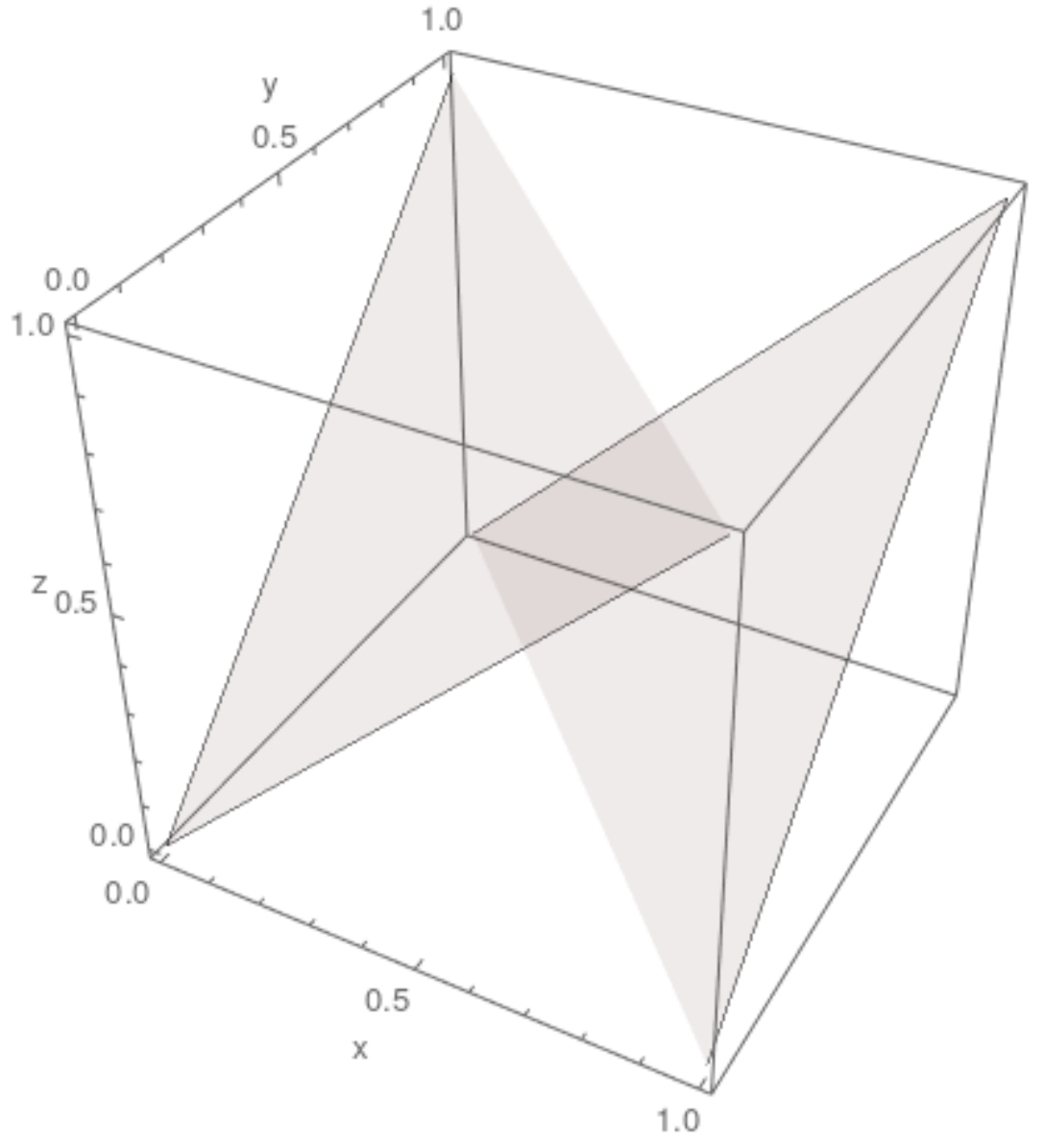}
    \subcaption{}
  \end{minipage}
  \begin{minipage}[b]{0.4\linewidth}
    \centering
    \includegraphics[scale=0.4]{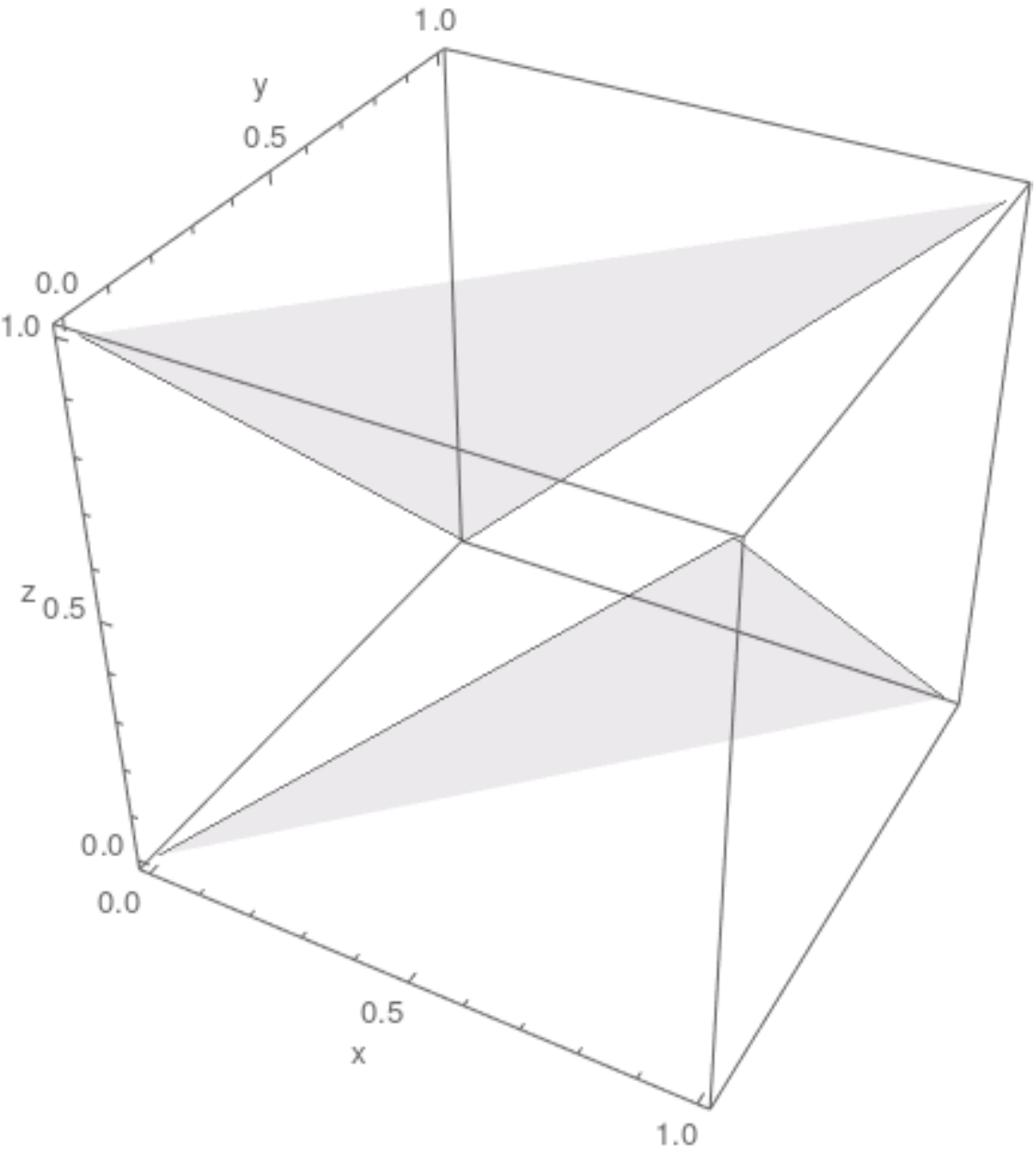}
    \subcaption{}
  \end{minipage}
  \begin{minipage}[b]{0.4\linewidth}
    \centering
    \includegraphics[scale=0.4]{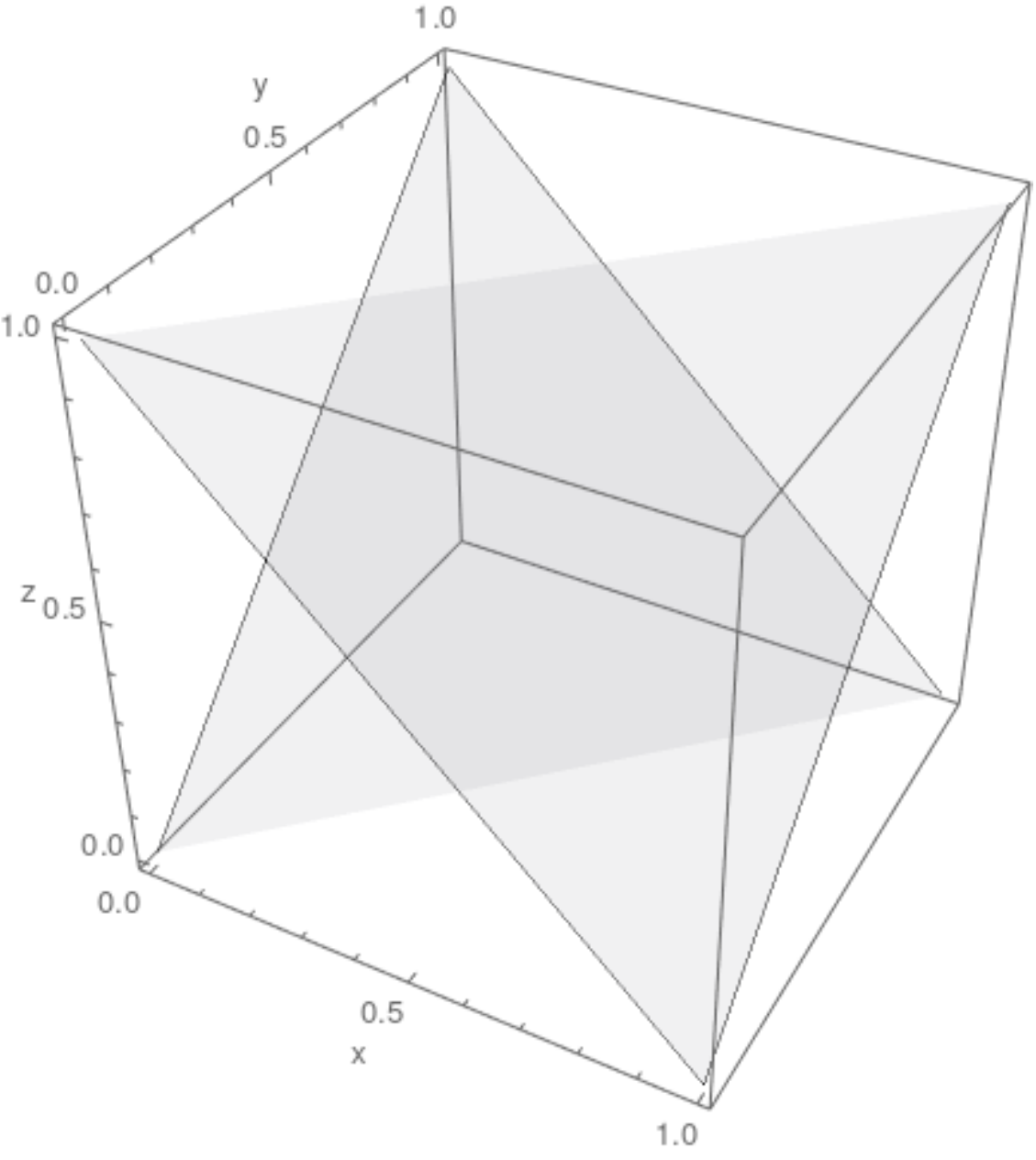}
    \subcaption{}
  \end{minipage}
  \begin{minipage}[b]{0.4\linewidth}
    \centering
    \includegraphics[scale=0.4]{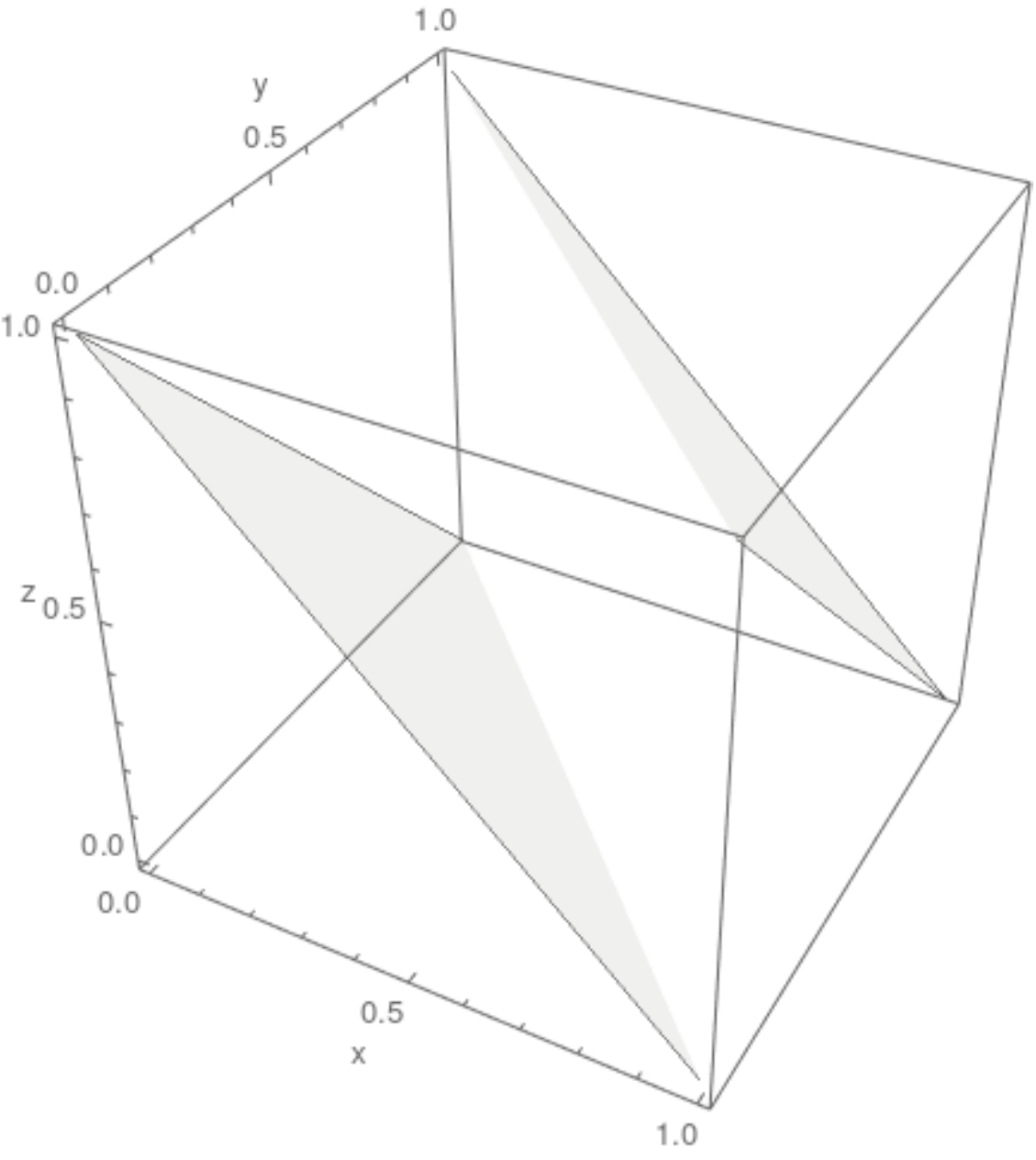}
    \subcaption{}
  \end{minipage}
\caption{Pictures of four planes: (a): $z=(2^{23}+1)x+y \bmod{1}$ , 
(b): $z=(2^{23}+1)x-y \bmod{1}$, 
(c): $z=-(2^{23}+1)x+y \bmod{1}$, 
(d): $z=-(2^{23}+1)x-y \bmod{1}$ 
}
\end{figure}

\begin{figure}[htbp]
  \includegraphics[scale=0.4]{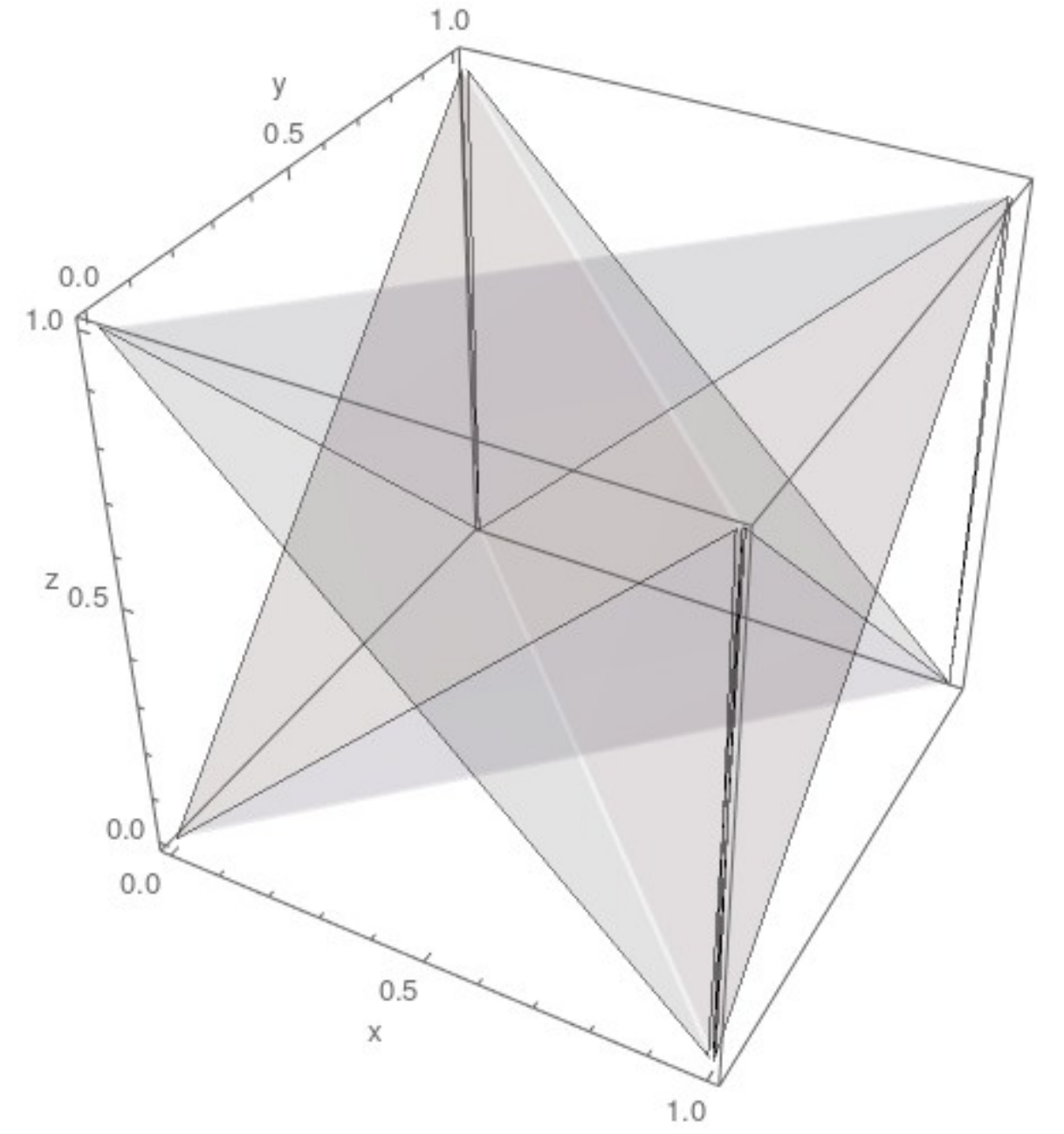}
  \caption{The union of four planes}
\end{figure}

Figure 4 shows the outputs of xorshift128+ with parameter $(a,b,c)=(23,17,26)$. 
Let $(x,y,z)$ be the consecutive outputs in $[0, 1)^3$. 
We only pick up those with $x \leq 1/2^{23}$, and plot $(2^{23}x, y, z)$.
We repeat this until we obtain $10000$ points. 

\begin{figure}[htbp]
  \includegraphics[scale=0.4]{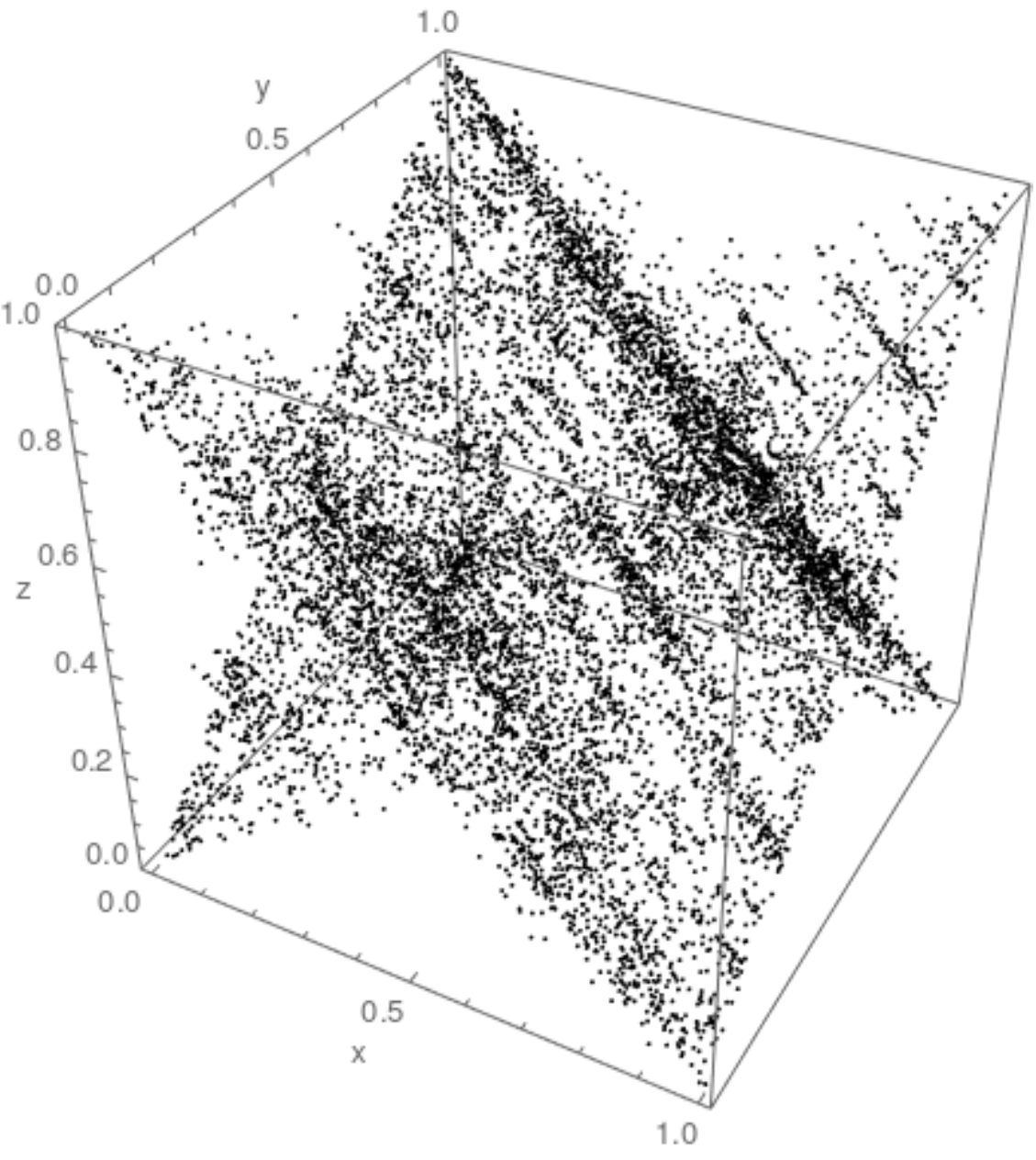}
  \caption{3D plots by xorshift128+: $x$-axis magnified by a factor of $2^{23}$}
\end{figure}

Figure 5 contains both the four planes (Figure 3) and 
the outputs of xorshift128+ (Figure 4). 
This coincidence justifies the approximated analysis done in this section.

\begin{figure}[htbp]
  \includegraphics[scale=0.4]{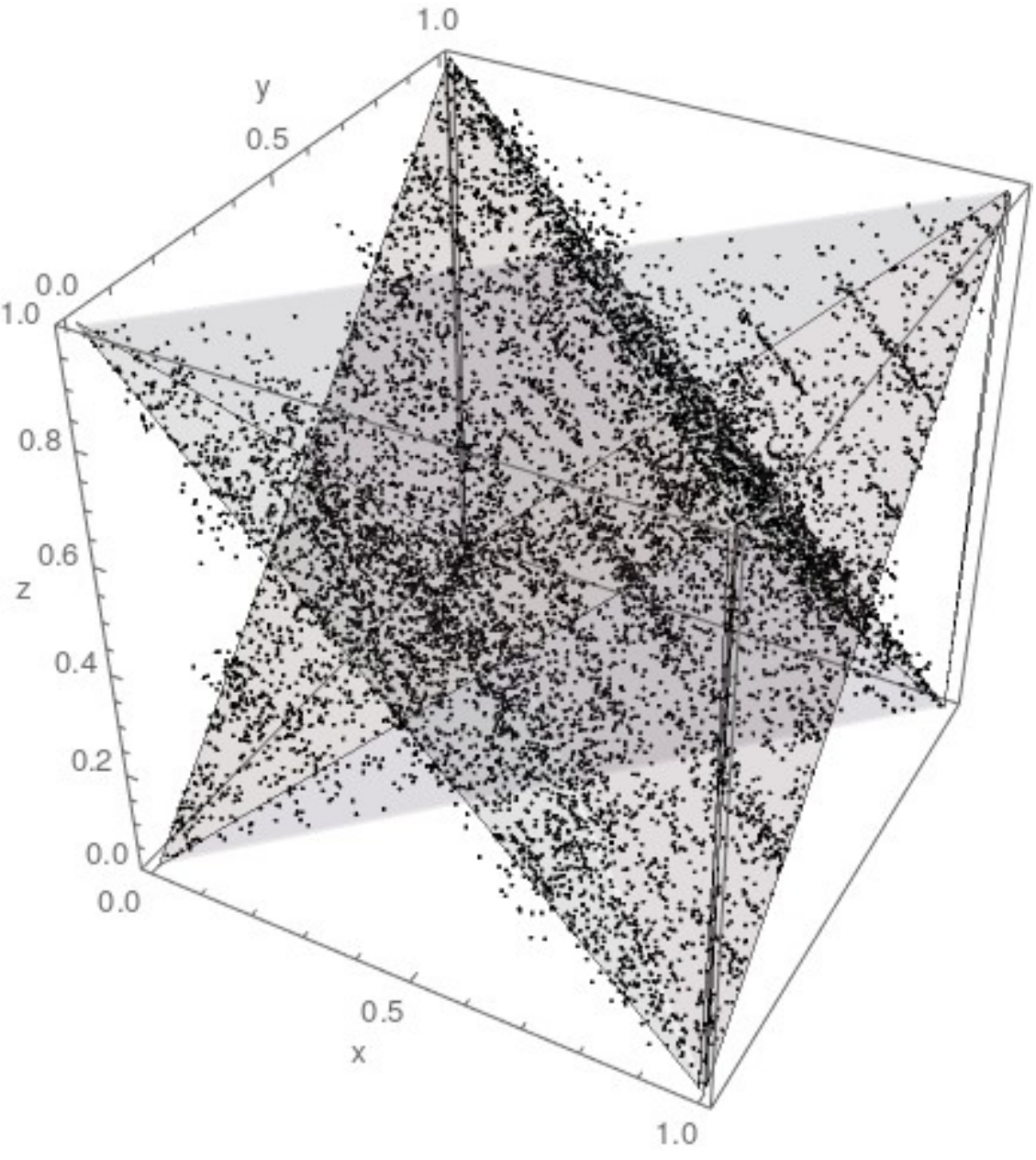}
  \caption{The union of Figure 3 and Figure 4}
\end{figure}

\section{Conclusion}
G. Marsaglia said ``random numbers fall mainly in the planes'' 
\cite{Marsaglia25}.
Experiments show that Vigna's xorshift128+ have this property. 
An analysis of this phenomenon based on approximation of xor by 
arithmetic addition and subtraction is discussed. 

\begin{acks}
This work was supported by JSPS KAKENHI Grant Numbers 26310211, 
JP17K14234 and JP18K03213.
\end{acks}

\bibliographystyle{ACM-Reference-Format}
\bibliography{haramoto01}










\end{document}